\documentclass[12pt]{article}
\usepackage{natbib}
\bibpunct[,]{(}{)}{;}{a}{,}{,}
\usepackage[utf8]{inputenc} 
\usepackage[T1]{fontenc}    
\usepackage{hyperref}       
\usepackage{url}            
\usepackage{booktabs}       
\usepackage{amsfonts}       
\usepackage{microtype}      

\usepackage{amsthm}
\usepackage{amsmath,amssymb}
\usepackage[pdftex]{graphicx}

\usepackage{algorithm, algpseudocode}
\newtheorem*{theorem*}{Theorem}
\newtheorem*{definition*}{Definition}
\newtheorem*{lemma*}{Lemma}
\newtheorem*{example*}{Example}
\newtheorem*{proposition*}{Propsotion}
\newtheorem*{corollary*}{Corollary*}
\newtheorem*{remark}{Remark}

\newtheorem{theorem}{Theorem}

\graphicspath{{figs/}}
\makeatletter

\newcommand{\kakko}[1]{\left( #1 \right)}
\newcommand{\ckakko}[1]{\left\{ #1 \right\}}
\newcommand{\dkakko}[1]{\left[ #1 \right]}

\newcommand{\nkakko}[1]{\left\| #1 \right\|}

\newcommand{\D}{\mathrm{d}}

\DeclareMathOperator*{\minimize}{\text{minimize} \quad}

\newcommand{\rank}[1]{\operatorname{rank}\left( #1 \right)}
\newcommand{\sign}[1]{\operatorname{sign} ( #1 )}

\newcommand{\abs}[1]{| #1 |}
\newcommand{\group}{{\mathcal{G}}}

\newcommand{\betah}{\hat{\beta}}
\newcommand{\betag}{\beta^{\mathcal{G}}}
\newcommand{\betahg}{\hat{\beta}^{\mathcal{G}}}
\newcommand{\betaz}{\beta^{\mathcal{G}}_{-0}}

\newcommand{\lamg}{\lambda^{\mathcal{G}}}
\newcommand{\lamb}{\bar{\lambda}^{\mathcal{G}}}
\newcommand{\Xg}{X^{\mathcal{G}}}
\newcommand{\xg}{x^{\mathcal{G}}}
\newcommand{\Xz}{X^{\mathcal{G}}_{-0}}

\newcommand{\Xtz}{X^{\mathcal{G}}_{-0}}

\newcommand{\gmax}{\overline{g}}

\newcommand{\blindCode}{0}

\addtolength{\oddsidemargin}{-.5in}%
\addtolength{\evensidemargin}{-.5in}%
\addtolength{\textwidth}{1in}%
\addtolength{\textheight}{1.3in}%
\addtolength{\topmargin}{-.8in}%

\begin{document}

\def\spacingset#1{\renewcommand{\baselinestretch}%
{#1}\small\normalsize} \spacingset{1}


\if0\blindCode
{
  \title{\bf An Exact Solution Path Algorithm for SLOPE and Quasi-Spherical OSCAR}
  \author{Shunichi Nomura\footnote{Email:nomura@ism.ac.jp} \\
    The Institute of Statistical Mathematics, 10-3 Midoricho, Tachikawa, Tokyo, Japan}
  \maketitle
} \fi

\if1\blindCode
{
  \bigskip
  \bigskip
  \bigskip
  \begin{center}
    {\LARGE\bf An Exact Solution Path Algorithm for SLOPE and Quasi-Spherical OSCAR}
\end{center}
  \medskip
} \fi

\bigskip
\begin{abstract}
Sorted $L_1$ penalization estimator (SLOPE) is a regularization technique for sorted absolute coefficients in high-dimensional regression.
By arbitrarily setting its regularization weights $\lambda$ under the monotonicity constraint,
SLOPE can have various feature selection and clustering properties.
On weight tuning, the selected features and their clusters are very sensitive to the tuning parameters.
Moreover, the exhaustive tracking of their changes is difficult using grid search methods. 
This study presents a solution path algorithm that provides the complete and exact path of solutions for SLOPE 
in fine-tuning regularization weights.
A simple optimality condition for SLOPE is derived and used to specify the next splitting point of the solution path.
This study also proposes a new design of a regularization sequence $\lambda$ for feature clustering,
which is called the quasi-spherical and octagonal shrinkage and clustering algorithm for regression (QS-OSCAR).
QS-OSCAR is designed with a contour surface of the regularization terms most similar to a sphere.
Among several regularization sequence designs, sparsity and clustering performance are compared through simulation studies.
The numerical observations show that QS-OSCAR performs feature clustering more efficiently than other designs.
\end{abstract}

\newpage
\spacingset{1.5} 
\section{Introduction}
This study considers feature selection and clustering in high-dimensional regression analysis.
The least absolute shrinkage and selection operator (Lasso)~\citep{tibshirani1996regression} is a $L_1$ regularization technique
that shrinks the least-square estimator (LSE) and removes predictors with slight effects on response by leading their coefficients to exact zero.
The Lasso is often combined with $L_2$ regularization terms that shrink the LSE as well and 
make the coefficients of highly correlated predictors similar in absolute value.
The $L_1$ penalization technique in the Lasso is extended to the fused Lasso~\citep{tibshirani2005sparsity}
that can be used for clustering the predictors into groups sharing common coefficient values.

The sorted $L_1$ penalization estimator (SLOPE)~\citep{bogdan2013, bogdan2015} or ordered weighted L1 (OWL)~\citep{zeng2014}, 
which is also a generalization of the Lasso and hereafter called SLOPE, is dealt with herein.
Let us consider the high-dimensional linear regression $y = X\beta + w$, where $y \in \mathbb{R}^n$ is a response vector;
$X \in \mathbb{R}^{n\times p}$ is a design matrix; $\beta \in \mathbb{R}^p$ is a coefficient vector; and $w$ is an observation noise vector.
SLOPE is a regularization technique on the order statistics $|\beta|_{(1)} \le |\beta|_{(2)} \le \cdots \le |\beta|_{(p)}$ of the absolute values of the coordinates in $\beta$
and is formulated as follows:
\begin{equation} \label{slope}
\minimize_{\beta\in \mathbb{R}^p}\ \ \frac{1}{2} \nkakko{y-X\beta}^2 + \sum_{i=1}^p \lambda_i |\beta|_{(i)}, 
\end{equation}
where $\lambda = (\lambda_1,\lambda_2,\dots,\lambda_p)^\top$ denotes the regularization parameters in a non-decreasing order, such that
$0\le \lambda_1 \le \lambda_2 \le \cdots \le \lambda_p$.
The monotonic structure of $L_1$ penalties on the sorted coefficients simultaneously encourages the sparsity and grouping of absolute coefficients.

The strength and balance of feature selection and clustering depend on the design of the regularization sequence,
which can be arbitrarily set under its monotonicity constraint $0\le \lambda_1 \le \lambda_2 \le \cdots \le \lambda_p$.
\citep{bogdan2015} presents two types of regularization sequences
designed to control the false discovery rate in feature selection under two different assumptions on $X$.
\citep{bondell2008simultaneous} proposed the octagonal shrinkage and clustering algorithm for regression (OSCAR)
for feature clustering with regularization terms originally defined with $L_\infty$ penalties, but can be represented by SLOPE as well.
Some recent studies have proven that SLOPE can achieve the minimax rate of convergence under certain sparsity setups in high-dimensional regression
~\citep{su2016, lecue2018, bellec2018}.

SLOPE \eqref{slope} is a convex optimization problem with efficient solving algorithms \citet{bogdan2015, luo2019efficient}.
\citet{bao2020fast} and \citet{larsson2020} proposed the screening rules for SLOPE to fasten updating solutions with a consecutive sequence of regularization weights.
However, since feature selection and clustering are very sensitive to the regularization weights,
tracking all the changes in the feature grouping of a continuous path using grid search methods is difficult.
\citet{takahashi2020} showed that OSCAR has more than 1000 changing points in its solution path for 100 coefficients in its numerical experiment.
\citet{bao2019} recently proposed an approximate path algorithm for SLOPE to obtain an approximate solution path with an arbitrary accuracy bound,
which does not cover all the changing points either.
Although some exact path algorithms have been proposed for the Lasso and more general $L_1$ regularized regression \citep{rosset2007, tibshirani2011},
they cannot be applied to SLOPE owing to its non-separable structure.

This study proposes an exact path algorithm for SLOPE, which is extended from that for OSCAR~\citep{takahashi2020}.
The proposed algorithm detects all the changing points in a solution path where the coefficient groups fuse and split.
A general optimality condition for SLOPE is derived to specify the splitting point of a path.
This condition involves the order of elements in the gradient of the loss function,
whose changes are also monitored as the switching events of a path.

After presenting the path algorithm, a new design for the regularization weights $\lambda$, called 
the quasi-spherical and octagonal shrinkage and clustering algorithm for regression (QS-OSCAR), is proposed as an OSCAR modification
for a more efficient feature clustering.
The regularization weights in QS-OSCAR are designed to have its octagonal contour surface most similar to a sphere on the space $\mathbb{R}^p$
in the sense of the ratio of the circumradius and inradius of the surface.
The features of QS-OSCAR and other types of regularization sequences for SLOPE are compared in numerical experiments.

\section{Path algorithm for SLOPE}
This section presents a new solution path algorithm for SLOPE.
Since this algorithm is based on the path algorithm for OSCAR~\citep{takahashi2020},
this section proceeds in the analogue of Section 3 of \citet{takahashi2020}.

\subsection{Notation}
Let us consider a solution path $\beta(\eta)$ for SLOPE \eqref{slope} with respect to the regularization parameters $\lambda(\eta) = \lambda_0 + \eta \bar{\lambda}$
driven by a scalar hyperparameter $\eta \in [0,\eta_{\max})$ with its initial point $\lambda_0 = (\lambda_{01},\lambda_{02},\dots,\lambda_{0p})^\top$
and search direction $\bar{\lambda} = (\bar{\lambda}_1,\bar{\lambda}_2,\dots,\bar{\lambda}_p)^\top \neq 0$. 
The terminal value $\eta_{\max}$ of the parameter $\eta$ may be either finite or infinite
as long as it holds that $0\le \lambda_1(\eta) \le \cdots \le \lambda_p(\eta)$ for any $\eta \in [0,\eta_{\max})$. 
Accordingly, we assume $\rank{X} = p$, which implies that $n \ge p$ and the objective function \eqref{slope} is strictly convex
(otherwise we can add a $L_2$ penalty term with a small weight to \eqref{slope} to make the objective function strictly convex).

The regularization terms in \eqref{slope} encourage the absolute values of the coefficients $|\betah(\eta)|$ to be zero and equal to each other.
Therefore, for a fixed hyperparameter $\eta \in [0,\eta_{\max})$, 
we define the set of fused groups $\group(\eta) = \{G_0, G_1, \dots, G_{\gmax}\}$ and
the grouped absolute coefficients $\betag(\eta) = (\betag_0, \betag_1,\dots,\betag_{\gmax})^\top \in \mathbb{R}^{\gmax+1}$ 
to satisfy the following statements:
\begin{itemize}
                \item $\bigcup_{g=0}^{\gmax} G_g = \ckakko{1, \cdots, p}$, where $G_0$ may be an empty set, but others may not.
                \item $0 = \betag_0< \betag_1 < \cdots < \betag_{\gmax}$ and $| \beta_i | = \betag_g $ for $i\in G_g$.
\end{itemize} 
Let $p_g$ be the cardinality of $G_g$ and $q_g = p_0 + \cdots + p_{g-1}$.
Corresponding to the fused groups, 
we define the grouped design matrix by
$\Xg = (\xg_0, \xg_1,\dots,\xg_{\gmax}) \in \mathbb{R}^{n\times (\gmax+1)}$, 
where $\xg_g = \sum_{j \in G_g} \sign{\beta_j} x_j$, and $x_j$ is the $j$-th column vector of $X$.
Let us denote by $\betaz(\eta)= (\betag_1,\dots,\betag_{\gmax})^\top$
the nonzero elements of the grouped coefficients and by $\Xz = (\xg_1,\dots,\xg_{\gmax}) $ 
the corresponding columns of the grouped design matrix.

\subsection{Piecewise linear solution path and optimal condition} 
Given the set of fused groups $\group(\eta)$, where $\gmax \ge 1$,
the problem \eqref{slope} can be reduced to the following quadratic programming problem:
\begin{equation*} 
\minimize_{\betaz\in \mathbb{R}^{\gmax}}\ \ \frac{1}{2} \nkakko{y - \Xz \betaz}^2 
+ \sum_{g=1}^{\gmax} \lamg_{0g} \betag_g 
+ \eta \sum_{g=1}^{\gmax} \lamb_g \betag_g,
\end{equation*}
where $\lamg_{0g} = \sum_{i = q_g + 1}^{q_{g+1}} \lambda_{0i}$ and $\lamb_g = \sum_{i = q_g + 1}^{q_{g+1}} \bar{\lambda}_i$.
Therefore, until the set of fused groups $\group(\eta)$ changes,
the solution $\beta(\eta)$ takes a linear path with $\betag(\eta)_0 \equiv 0$ and nonzero grouped elements:
\begin{equation} \label{solution}
\betaz(\eta) = \dkakko{\kakko{\Xtz}^\top \Xtz}^{-1} \dkakko{\kakko{\Xtz}^\top  y + \lamg_0 + \eta \lamb}, 
\end{equation}
where $\lamg_0 = (\lamg_{01},\lamg_{02},\dots,\lamg_{0\gmax})^\top$ and $\lamb = (\lamb_{1},\lamb_{2},\dots,\lamb_{\gmax})^\top$.

We consider the optimality condition for \eqref{slope} to check whether $\beta(\eta)$ defined by \eqref{solution} is optimal when $\eta$ moves.
Here, we treat a slightly more general setting in which
the square loss function $\frac{1}{2} \| y - X \beta \|^2$ in \eqref{slope} is generalized to a strongly convex and differentiable loss function $f(\beta;y)$.
Then, the objective function is described by
\begin{equation} \label{slopef}
\minimize_{\beta\in \mathbb{R}^p}\ \ f(\beta;y) + \sum_{i=1}^p \lambda_i |\beta|_{(i)}.
\end{equation}
and has a unique global optimum $\betah$ because of the strong convexity of $f$.
Let $\nabla f(\beta) = (\nabla f_1(\beta),\nabla f_2(\beta),\dots, \nabla f_p(\beta))^\top$ be a gradient of the loss function $f$
and $o(1),o(2),\dots,o(p) \in \{1,\dots,p\}$ be the order of indices in $\betah$ 
such that $G_g = \{ o(q_g+1),o(q_g+2),\dots,o(q_{g+1}) \} $ and 
$s_{o(q_g+1)} \nabla f_{o(q_g+1)}(\betah) \le s_{o(q_g+2)} \nabla f_{o(q_g+2)}(\betah) \le \cdots \le s_{o(q_g+1)} \nabla f_{o(q_g+1)}(\betah)$ 
for each group $g$, where $s = (s_1,s_2,\dots,s_p)^\top$ is defined by
\begin{equation*} 
s_i = \begin{cases}  -\sign{\betah_i} & \text{if $\betah_i \neq 0$}, \\ \sign{\nabla f_i(\betah)} & \text{if $\betah_i = 0$}. \end{cases}
\end{equation*} 
Define $\nabla f^{\group}_{g,k}(\betah) = \sum_{i=q_g + k}^{q_{g+1}} s_{o(i)} \nabla f_{o(i)}(\betah)$ 
and $\lamg_{g,k} = \sum_{i=q_g + k}^{q_{g+1}} \lambda_i$ for $g=1,\dots,\gmax$, $k=1,\dots,p_g$.
The optimality condition of $\betah$ for \label{slopef} can be given by the following theorem:
\begin{theorem} \label{theorem}
The coefficient vector $\betah$ is a global optimum of \eqref{slopef} if and only if the following conditions hold:
\begin{eqnarray}\label{cond1}
\lamg_{g,1} - \nabla f^{\group}_{g,1}(\betah) & = & 0 \qquad g = 1,\dots,\gmax, \\ 
\lamg_{0,k} - \nabla f^{\group}_{0,k}(\betah) & \ge & 0 \qquad k = 1,\dots,p_0, \label{cond2}  \\ 
\lamg_{g,k} - \nabla f^{\group}_{g,k}(\betah) & \ge & 0 \qquad g = 1,\dots,\gmax,\; k = 2,\dots,p_g. \label{cond3}
\end{eqnarray}
\end{theorem}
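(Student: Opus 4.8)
The statement is an optimality characterization, so the plan is to pass to the first-order condition and then make the subdifferential of the sorted-$\ell_1$ penalty $\Omega(\beta)\defequal\sum_{i=1}^p\lambda_i|\beta|_{(i)}$ at the clustered point $\betah$ explicit. Since $f$ is strongly convex and $\Omega$ is a finite-valued (polyhedral) norm, $F\defequal f+\Omega$ is strongly convex with a unique minimizer; by convexity $\betah$ equals that global minimum if and only if $0\in\partial F(\betah)$, and since $f$ is differentiable and finite everywhere, $\partial F(\betah)=\nabla f(\betah)+\partial\Omega(\betah)$. So the theorem is equivalent to the single statement ``$-\nabla f(\betah)\in\partial\Omega(\betah)$ if and only if \eqref{cond1}--\eqref{cond3}'', and everything hinges on describing $\partial\Omega(\betah)$.

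To describe $\partial\Omega(\betah)$ I would use that $\Omega$ is the support function of the convex hull $C$ of all signed permutations of the vector $\lambda$ (the rearrangement inequality gives $\Omega(\beta)=\max_{z\in C}\langle z,\beta\rangle$), so $\partial\Omega(\betah)$ is the face of $C$ exposed by $\betah$, that is $\{z\in C:\langle z,\betah\rangle=\Omega(\betah)\}$. Determining which $z$ attain the maximum, organized by the cluster structure $\group(\betah)=\{G_0,\dots,G_{\gmax}\}$, gives: on a nonzero cluster $G_g$ one needs $\sign{z_j}=\sign{\betah_j}$ for $j\in G_g$ and $(|z_j|)_{j\in G_g}$ to be a convex combination of rearrangements of the block $(\lambda_{q_g+1},\dots,\lambda_{q_{g+1}})$ --- equivalently $(|z_j|)_{j\in G_g}$ has the same total as that block and is majorized by it; while on the zero cluster $G_0$, $(|z_j|)_{j\in G_0}$ must only be weakly submajorized by $(\lambda_1,\dots,\lambda_{p_0})$. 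An equivalent, more transparent derivation computes the one-sided directional derivative $\Omega'(\betah;d)$ directly: for small $t>0$ the clusters of $|\betah+td|$ stay separated, and within $G_g$ ($g\ge1$, resp. $G_0$) the induced sorted order is governed by $d_j\sign{\betah_j}$ (resp. $|d_j|$), so $\Omega'(\betah;d)$ splits into a sum over clusters of maxima of a linear functional over rearrangements of the matching $\lambda$-block; combining this with $\partial\Omega(\betah)=\{z:\langle z,d\rangle\le\Omega'(\betah;d)\ \text{for all }d\}$ recovers the same description.

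The remaining step is bookkeeping: set $z=-\nabla f(\betah)$ and use the definitions of $s$ and of the index order $o(\cdot)$. On a nonzero cluster $G_g$, $-\sign{\betah_j}=s_j$, so once signs match $|z_j|=s_j\nabla f_j(\betah)$; the equal-total constraint is then precisely $\nabla f^{\group}_{g,1}(\betah)=\lamg_{g,1}$, which is \eqref{cond1}, and the top-$m$ partial-sum inequalities of the majorization --- evaluated at their extremal subsets, which by the prescribed increasing arrangement of $s_{o(i)}\nabla f_{o(i)}(\betah)$ within the block are $\{o(q_g+k),\dots,o(q_{g+1})\}$ with $k=p_g-m+1$ --- are precisely \eqref{cond3}. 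On $G_0$, $s_j=\sign{\nabla f_j(\betah)}$ gives $|z_j|=s_j\nabla f_j(\betah)$, and the weak-submajorization inequalities evaluated at $\{o(k),\dots,o(p_0)\}$ are precisely \eqref{cond2}. The sign matching on nonzero clusters is not an extra hypothesis: since each $\lambda$-block is nonnegative, any vector majorized by it has all entries nonnegative, so $(s_j\nabla f_j(\betah))_{j\in G_g}$ lying in that permutohedron already forces $s_j\nabla f_j(\betah)\ge0$, i.e. $\sign{z_j}=\sign{\betah_j}$. Hence \eqref{cond1}--\eqref{cond3} are equivalent to $-\nabla f(\betah)\in\partial\Omega(\betah)$, which settles both directions simultaneously. (For the quadratic loss $f(\beta)=\tfrac12\nkakko{y-X\beta}^2$ one has $\nabla f_j(\beta)=-x_j^\top(y-X\beta)$, so \eqref{cond1} becomes the normal equations that produce the linear segment \eqref{solution}, while \eqref{cond2}--\eqref{cond3} are the inequalities keeping the current clustering optimal.)

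The main obstacle I anticipate is the exact description of $\partial\Omega(\betah)$ at a clustered point: showing that inside each nonzero cluster the admissible subgradients fill out the \emph{entire} permutohedron of the matching $\lambda$-block (with equality on the total), whereas on the zero cluster one gets only the weak-submajorization cone; and then matching the maximizing subsets in these majorization inequalities with the nested partial sums $\nabla f^{\group}_{g,k}(\betah)$ built from the order $o(\cdot)$. Once the extremal directions are correctly identified, the two implications are symmetric and what remains --- the rearrangement and majorization manipulations and the index bookkeeping with $q_g$, $p_g$, $k$, $m$ --- is routine. A secondary point to treat with care is the claim that the clusters of $|\betah+td|$ do not interleave for small $t$, used in the directional-derivative computation; it follows from the strict ordering $0=\betag_0<\betag_1<\cdots<\betag_{\gmax}$ and continuity.
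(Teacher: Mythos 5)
Your proposal is correct, but it proves the theorem by a genuinely different route than the paper. You reduce everything to $0\in\nabla f(\hat\beta)+\partial\Omega(\hat\beta)$ and then describe $\partial\Omega(\hat\beta)$ explicitly as the face of the signed permutohedron of $\lambda$ exposed by $\hat\beta$: sign-matched permutohedra (equal block total plus majorization) on the nonzero clusters and the weak-submajorization ball on the zero cluster, after which the identification of the extremal partial sums with $\nabla f^{\group}_{g,k}(\hat\beta)$ via the order $o(\cdot)$ gives \eqref{cond1}--\eqref{cond3}; your checks that the order $o(\cdot)$ picks out the binding subsets, and that majorization by a nonnegative block forces the sign condition automatically, are the right ones and do close the argument, handling both implications at once. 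The paper instead argues directly and more elementarily: for sufficiency it applies the gradient inequality for the convex loss, the Hardy--Littlewood--P\'olya rearrangement inequality within each group, and an Abel (summation-by-parts) reorganization to exhibit $h(\tilde\beta)-h(\hat\beta)$ as a nonnegative combination of the quantities $\lambda^{\group}_{g,k}-\nabla f^{\group}_{g,k}(\hat\beta)$ locally, invoking strict convexity to pass from local to global; for necessity it constructs explicit first-order perturbations (translating a whole group when \eqref{cond1} fails, inflating the top segment $\{o(q_g+k),\dots,o(q_{g+1})\}$ when \eqref{cond2} or \eqref{cond3} fails). Your route buys a complete description of the SLOPE subdifferential at a clustered point and a single symmetric argument for both directions, but it rests on polyhedral facts (support-function/exposed-face calculus, Rado-type characterizations of the permutohedron and the signed permutohedron) that you would have to prove or cite carefully --- exactly the obstacle you flag --- whereas the paper's argument is self-contained, using only the rearrangement inequality and hand-built perturbations.
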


\begin{proof}
First, we prove the sufficiency of the conditions \eqref{cond1}, \eqref{cond2} and \eqref{cond3} for the optimality of $\betah$.
The objective function $h(\beta) = f(\beta;y) + \sum_{i=1}^p \lambda_i |\beta|_{(i)}$ is strictly convex, and hence has a unique local optimum.
Therefore, it suffices to prove that $h(\tilde{\beta}) \ge h(\betah)$ if $\| \tilde{\beta}-\betah \|_{\infty} < \frac{1}{2} \min_{g=1,\dots,\gmax} (\betahg_g - \betahg_{g-1})$.
Define $\Delta = |\tilde{\beta}| - |\betah|$ and the order $\tilde{o}(1),\tilde{o}(2),\dots,\tilde{o}(p) \in \{1,\dots,p\}$ of its coordinates 
such that $G_g = \{ \tilde{o}(q_g+1),\tilde{o}(q_g+2),\dots,\tilde{o}(q_{g+1}) \} $ and 
$\Delta_{\tilde{o}(q_g+1)} \le \Delta_{\tilde{o}(q_g+2)} \le \cdots \le \Delta_{\tilde{o}(q_{g+1})} $. 
Because the order satisfies $0 \le |\tilde{\beta}_{\tilde{o}(1)}| \le |\tilde{\beta}_{\tilde{o}(2)}| \le \cdots \le |\tilde{\beta}_{\tilde{o}(p)}|$ and
$0 \le |\betah_{\tilde{o}(1)}| \le |\betah_{\tilde{o}(2)}| \le \cdots \le |\betah_{\tilde{o}(p)}|$,
we have
\begin{equation*} 
\sum_{i=1}^p \lambda_i (|\tilde{\beta}|_{(i)} - |\betah|_{(i)}) 
= \sum_{i=1}^p \lambda_i (|\tilde{\beta}_{\tilde{o}(i)}| - |\betah_{\tilde{o}(i)}|) = \sum_{i=1}^p \lambda_i \Delta_{\tilde{o}(i)}.
\end{equation*} 
Consequently, we obtain
\begin{align*} 
h(\tilde{\beta}) - h(\betah) 
&\ge \sum_{i=1}^p \nabla f_{\tilde{o}(i)}(\betah) (\tilde{\beta}_{\tilde{o}(i)}-\betah_{\tilde{o}(i)}) + \sum_{i=1}^p \lambda_i \Delta_{\tilde{o}(i)} \\
&\ge - \sum_{g=0}^{\gmax} \sum_{i=q_g + 1}^{q_{g+1}} s_{\tilde{o}(i)} \nabla f_{\tilde{o}(i)}(\betah) \Delta_{\tilde{o}(i)} + \sum_{i=1}^p \lambda_i \Delta_{\tilde{o}(i)},
\end{align*} 
where the first inequality follows from the convexity of $f$, while the second one follows from the equalities that
$- s_i \nabla f_{\tilde{o}(i)}(\betah) = -|\nabla f_{\tilde{o}(i)}(\betah)|$ for $i = 1,\dots,p_0$
and $\Delta_i = - s_i(\tilde{\beta}_i-\betah_i)$ for $i = p_0+1,\dots, p$.
Since we have $s_{o(q_g+1)} \nabla f_{o(q_g+1)}(\betah) \le s_{o(q_g+2)} \nabla f_{o(q_g+2)}(\betah) \le \cdots \le s_{o(q_{g+1})} \nabla f_{o(q_{g+1})}(\betah)$ 
for $g=0,\dots,\gmax$ by the definition of $o(1),o(2),\dots,o(p)$,
the Hardy-Littlewood-P\'{o}lya inequality (Theorem 368 in ~\citet{hardy1934}) yields that
$\sum_{i=q_g + 1}^{q_{g+1}} s_{\tilde{o}(i)} \nabla f_{\tilde{o}(i)}(\betah) \Delta_{\tilde{o}(i)} 
\le \sum_{i=q_g + 1}^{q_{g+1}} s_{o(i)} \nabla f_{o(i)}(\betah) \Delta_{\tilde{o}(i)}$.
Above all, we obtain
\begin{align*} 
h(\tilde{\beta}) - h(\betah) 
&\ge - \sum_{g=0}^{\gmax} \sum_{i=q_g + 1}^{q_{g+1}} s_{o(i)} \nabla f_{o(i)}(\betah) \Delta_{\tilde{o}(i)} + \sum_{i=1}^p \lambda_i \Delta_{\tilde{o}(i)} \\
&= \sum_{g=0}^{\gmax} \sum_{i=q_g + 1}^{q_{g+1}} (\lambda_i - s_{o(i)} \nabla f_{o(i)}(\betah)) \Delta_{\tilde{o}(i)} \\
&= \sum_{g=0}^{\gmax} \{ (\lamg_{g,1} - \nabla f^{\group}_{g,1}(\betah))\Delta_{\tilde{o}(q_g + 1)} 
+ \sum_{k=2}^{p_g} \{ (\lamg_{g,k} - \nabla f^{\group}_{g,k}(\betah))(\Delta_{\tilde{o}(q_g + k)}-\Delta_{\tilde{o}(q_g + k - 1)})  \}.
\end{align*} 
Since $\Delta_{\tilde{o}(1)} \ge 0$ if $p_0 \ge 1$ and $\Delta_{\tilde{o}(q_g + k)}-\Delta_{\tilde{o}(q_g + k - 1)} > 0$,
$h(\tilde{\beta}) - h(\betah) \ge 0$ for any $\tilde{\beta}$ within $\| \tilde{\beta}-\betah \|_{\infty} < \frac{1}{2} \min_{g=1,\dots,\gmax} (\betahg_g - \betahg_{g-1})$
if the conditions \eqref{cond1}, \eqref{cond2} and \eqref{cond3} are all satisfied, which completes the proof of the sufficiency.

Next, we prove the necessity of the conditions \eqref{cond1}, \eqref{cond2} and \eqref{cond3} for the optimality of $\betah$.
If the condition \eqref{cond1} fails (i.e., $\lamg_{g,1} - \nabla f^{\group}_{g,1}(\betah) \neq 0$) for some $g=1,\dots, \gmax$, 
we define $\tilde{\beta}$ by
\begin{equation*} 
\tilde{\beta}_i = 
\left\{ \begin{array}{ll} \betah_i - \sign{ \lamg_{g,1} - \nabla f^{\group}_{g,1}(\betah) } \sign{\betah_i} \delta & i \in G_g, \\ \betah_i & \mbox{otherwise}, 
\end{array} \right.
\end{equation*} 
with a small positive scalar $0 < \delta < \min_{g=1,\dots,\gmax}(\betag_g - \betag_{g-1})$. We then have
\begin{align*} 
h(\tilde{\beta}) - h(\betah) &= f(\tilde{\beta}) - f(\betah) -  \sign{\lamg_{g,1} - \nabla f^{\group}_{g,1}(\betah)} \delta \sum_{i=q_g +1}^{q_{g+1}} \lambda_i  \\
& =  -|\lamg_{g,1} - \nabla f^{\group}_{g,1}(\betah)|\delta + o(\delta),
\end{align*} 
which takes a negative value for a sufficiently small $\delta$, indicating that $\betah$ is not optimal.
If either condition \eqref{cond2} or \eqref{cond3} fails (i.e. $\lamg_{g,k} - \nabla f^{\group}_{g,k}(\betah) < 0$) for some $g$ and $k$, 
we define $\tilde{\beta}$ by
\begin{equation*} 
\tilde{\beta}_{o(i)} =
\left\{  \begin{array}{ll} \betah_{o(i)} + \sign{\betah_{o(i)}} \delta & i \in \{q_g + k,\dots, q_{g+1}\}, \\ \betah_{o(i)} & \mbox{otherwise}, \end{array} \right.
\end{equation*} 
with a small positive scalar $0 < \delta < \min_{g=1,\dots,\gmax}(\betag_g - \betag_{g-1})$. We then have
\begin{align*} 
h(\tilde{\beta}) - h(\betah) &= f(\tilde{\beta}) - f(\betah) + \delta \sum_{i=q_g +1}^{q_{g+1}} \lambda_i \\
& =  \{ \lamg_{g,k} - \nabla f^{\group}_{g,k}(\betah) \} \delta + o(\delta),
\end{align*} 
which takes a negative value for a sufficiently small $\delta$, indicating that $\betah$ is not optimal.
Thus, $\betah$ is not optimal unless the conditions \eqref{cond1}, \eqref{cond2} and \eqref{cond3} are all satisfied, which completes the proof.
\end{proof}

\subsection{Path algorithm} 
In this subsection, we construct a path algorithm for SLOPE.
As in the path algorithm for OSCAR \citep{takahashi2020}, the proposed algorithm monitors three kinds of events, that is, fusing, splitting, and switching events in its solution path.
Before showing the algorithm, we introduce these events and specify their occurrence times in turn.

First, the fusing event is a type of event where adjacent groups are fused when their coefficients collide.
Since the grouped absolute coefficients $\betag(\eta)$ move linearly as in \eqref{solution} along $\eta$
with its slope $\frac{\D \betag}{\D \eta} = \kakko{0, (\lamb)^\top [(\Xz)^{\top} \Xz]^{-1}}^\top$, 
two adjacent groups $G_g$ and $G_{g+1}$ have to be fused at time $\Delta_{g}^{\text{fuse}}$ from $\eta$ given by
\begin{equation} \label{fuse}
\Delta^{\text{fuse}}_g(\eta) = \begin{cases}  \frac{-\betag_{g+1}(\eta) + \betag_{g}(\eta)}{\frac{\D \betag_{g+1}}{\D \eta} - \frac{\D \betag_{g}}{\D \eta}} & \text{if $\frac{\D \betag_{g+1}}{\D \eta} < \frac{\D \betag_{g}}{\D \eta}$}, \\
\displaystyle \infty & \text{otherwise}.
\end{cases}
\end{equation}

Next, the splitting event is a type of event where a group is split into two groups 
when either condition \eqref{cond2} or \eqref{cond3} is violated for some $g$ and $k$.
When it happens for some $g$ and $k$, the fused group $G_g = \{o(q_g+1),\dots,o(q_{g+1}) \}$ is split into
two groups $G_g = \{ o(q_g+1),\dots, o(q_g+k-1)\}$ and $G_{g+1} = \{ o(q_g+k),\dots,o(q_{g+1}) \}$.
The indices $g+1,\dots,\gmax$ of the subsequent groups are then increased by one.
Note that, if the condition \eqref{cond2} is violated for $k=1$, 
$G_0 = \{ o(1),\dots,o(p_0) \}$ is split into $G_0 = \emptyset$ and $G_1 = \{ o(1),\dots,o(p_0) \}$.

The optimality conditions \eqref{cond2} and \eqref{cond3} in Theorem~\ref{theorem} are used to specify the timings of splitting events. 
Here, we apply the square loss function $f(\beta;y) = \frac{1}{2} \| y-X\beta \|^2$ and its gradient $\nabla f(\beta) = -X^\top (y-X\beta) = X^\top (\Xz \betaz-y)$
to Theorem~\ref{theorem}.
Define the order $o(1),o(2),\dots,o(p) \in \{1,\dots,p\}$ of indices in $\beta$ to satisfy $G_g = \{ o(q_g+1),o(q_g+2),\dots,o(q_{g+1}) \} $ and 
$s_{o(q_g+1)} x_{o(q_g+1)}^\top (\Xz \betaz-y) \le s_{o(q_g+2)} x_{o(q_g+2)}^\top (\Xz \betaz-y) \le \cdots \le s_{o(q_{g+1})} x_{o(q_{g+1})}^\top (\Xz \betaz-y)$ 
for each group, where $s = (s_1,s_2,\dots,s_p)^\top$ is defined by
\begin{equation*} 
s_i = \begin{cases}  -\sign{\beta_i} & \text{if $\beta_i \neq 0$}, \\ \sign{x_i^\top (\Xz \betaz-y)} & \text{if $\beta_i = 0$}. \end{cases}
\end{equation*} 
Furthermore, we define $\nabla f^{\group}_{g,k}(\beta) = \sum_{i=q_g + k}^{q_{g+1}} s_{o(i)} \nabla f_{o(i)}(\beta) = x_{\overline{o}(g,k)}^\top (\Xz \betaz -y) $,
where $x_{\overline{o}(g,k)} = \sum_{j=q_g+k}^{q_{g+1}} s_{o(j)} x_{o(j)}$.
Thus, the splitting event occurs when either condition \eqref{cond2} or \eqref{cond3} is violated for some $g$ and $k$ at the time $\Delta_{g, k}^{\text{split}}$ from $\eta$ given by
\begin{equation} \label{split}
	\Delta_{g, k}^{\text{split}}(\eta) = \begin{cases} \frac{ x_{\overline{o}(g,k)}^\top \{y-\Xz \betaz(\eta)\} + \lamg_{g,k}(\eta) }{x_{\overline{o}(g,k)}^\top \frac{\D \betaz}{\D \eta}-\lamb_{g,k}} & \text{if $x_{\overline{o}(g,k)}^\top \frac{\D \betaz}{\D \eta}-\lamb_{g,k} > 0$}, \\
	\infty & \text{otherwise}, 
	\end{cases}
\end{equation}
where $\lamg_{g,k}(\eta) = \sum_{i=q_g + k}^{q_{g+1}} \lambda_i(\eta) = \sum_{i=q_g + k}^{q_{g+1}} (\lambda_{0i} + \eta \bar{\lambda}_i )$ and
$\lamb_{g,k} = \sum_{i=q_g + k}^{q_{g+1}}\bar{\lambda}_i$. 

Finally, the switching event is a type of event that changes the order $o(1),o(2),\dots,o(p) \in \{1,\dots,p\}$ of indices in $\beta$ or the sign of $s_{o(1)}$,
which may change the timings \eqref{split} of the splitting events.
Note that the slope $\frac{\D \betag}{\D \eta}$ of a solution path and the timings \eqref{fuse} of the fusing events are not affected by the switching events.
The switching event on the order of coefficients $o(1),\dots,o(p)$ occurs when the inequality 
$s_{o(k)} x_{o(k)}^\top (\Xz \betaz - y) \le s_{o(k+1)} x_{o(k+1)}^\top (\Xz \betaz - y)$ becomes reversed for some $k \in \{1,\dots,p-1\}\setminus \{q_1,\dots,q_{\gmax}\}$,
at which the indices assigned to $o(k)$ and $o(k+1)$ have to be switched.
The timing $\Delta_{k}^\text{switch}$ from $\eta$ of the event for each $k$ is given by
\begin{equation*} 
\Delta_{k}^\text{switch}(\eta) = \begin{cases} 
\frac{(s_{o(k)} x_{o(k)}^\top-s_{o(k+1)} x_{o(k+1)}^\top) \{y-\Xz \betaz(\eta)\}}{(s_{o(k)} x_{o(k)}^\top-s_{o(k+1)} x_{o(k+1)}^\top) \Xz \frac{\D \betaz}{\D \eta}} 
& \text{if $(s_{o(k)} x_{o(k)}^\top-s_{o(k+1)} x_{o(k+1)}^\top) \Xz \frac{\D \betaz}{\D \eta} > 0$}. \\
\infty & \text{otherwise}.
\end{cases}
\end{equation*}
Similarly, the switching event that reverses the sign of $s_{o(1)}$ occurs when $\beta_{o(1)} = 0$ and 
the sign $s_{o(1)} = \sign{\nabla f_i(\beta)} = \sign{x_{o(1)}^\top (\Xz \betaz - y)}$ reverses,
which occurs at time $\Delta_{0}^\text{switch}$ from $\eta$ given by
\begin{equation*} 
\Delta_{0}^\text{switch} = \begin{cases} \frac{x_{o(1)}^\top \{y-\Xz \betaz(\eta)\}}{x_{o(1)}^\top \Xz \frac{\D \betaz}{\D \eta}} 
& \text{if $s_{o(1)} x_{o(1)}^\top \Xz \frac{\D \betaz}{\D \eta} < 0$}. \\
\infty & \text{otherwise}.
\end{cases}
\end{equation*}

Algorithm~\ref{algorithm} outlines the proposed path algorithms for SLOPE using the events and their timings described above.
The initial solution for $\lambda = 0$ is obtained by the LSE $\beta^{(0)} = (X^\top X)^{-1}X^\top y$ if $\lambda_0 = 0$;
otherwise, we can use some efficient solvers~\citep{bogdan2015, luo2019efficient} to obtain the initial solution for SLOPE with $\lambda=\lambda_0$.
The algorithm iterates until the next event time $\eta^{(t)} + \Delta^{\min}$ becomes equal to or larger than $\eta_{\max}$.
The computational cost in each iteration can be evaluated in the same manner as in \citet{takahashi2020} and depends on which type of event occurs.
Each iteration where a fusing/splitting event occurs requires $\mathcal{O}(np)$ time by using the block matrix computation
In contrast, each iteration where a switching event occurs requires only $\mathcal{O}(n)$ time
because most of the timings of the next events can be updated by only subtracting $\Delta^{\min}$.
See Apendix C in \citet{takahashi2020} for more details of updating procedure in each event type.
Consequently, Algorithm~\ref{algorithm} requires $\mathcal{O}(T_\text{fuse+split}np + T_\text{switch} n)$ time except for the initial step to obtain $\beta^{(0)}$,
where $T_\text{fuse+split}$ and $T_\text{switch}$ are the numbers of iteration with fusing/splitting events and switching events, respectively.

\begin{algorithm} 
	\caption{Path algorithm for SLOPE} \label{algorithm}
	\begin{algorithmic}[1]
		\State $t \leftarrow 0$, $\eta^{(0)} \leftarrow 0$
                \State Solve \eqref{slope} with $\lambda = \lambda_0$ to obtain the initial solution $\beta^{(0)}$.
		\State Compute $\group$, $\betag$, $\Xg$, $o(\cdot)$, $s$, $[(\Xz)^\top \Xz]^{-1}$, $\frac{\D \betag}{\D \eta}$ and $\frac{\D \beta}{\D \eta}$.
		\State Compute timings of events $\Delta_{g}^\text{fuse}$, $\Delta_{g, k}^\text{split}$, $\Delta_{k}^\text{switch}$ and the minimum $\Delta^{\min}$ of them all.
                \While{$\eta^{(t)} + \Delta^{\min} < \eta_{\max}$}
		\State $\eta^{(t+1)} \leftarrow \eta^{(t)} + \Delta^{\min}$, $\beta^{(t+1)} \leftarrow \beta^{(t)} + \Delta^{\min} \frac{\D \beta}{\D \eta}$.
		\If{$\Delta_{g}^\text{fuse} = \Delta^{\min}$ for some $g$}
		\State Fuse $G_g$ and $G_{g+1}$, and update $\group$, $\betag$, $\Xg$, $o(\cdot)$, $[(\Xz)^\top \Xz]^{-1}$, $\frac{\D \betag}{\D \eta}$ and $\frac{\D \beta}{\D \eta}$. 
		\ElsIf{$\Delta_{0}^\text{switch} = \Delta^{\min}$}
		\State Switch the sign of $s_{o(1)}$. 
		\ElsIf{$\Delta_{k}^\text{switch} = \Delta^{\min}$ for some $k$}
                \State Switch the indices assigned to $o(k)$ and $o(k+1)$.  
		\ElsIf{$\Delta_{g, k}^\text{split} = \Delta^{\min}$ for some $g$ and $k$}
		\State Split $G_g$, and update $\group$, $\betag$, $\Xg$, $o(\cdot)$, $[(\Xz)^\top \Xz]^{-1}$, $\frac{\D \betag}{\D \eta}$ and $\frac{\D \beta}{\D \eta}$. 
		\EndIf
		\State $t\leftarrow t+1$
		\State Compute timings of events $\Delta_{g}^\text{fuse}$, $\Delta_{g, k}^\text{split}$, $\Delta_{k}^\text{switch}$ and the minimum $\Delta^{\min}$ of them all. 
		\EndWhile
                \If{$\eta_{\max}<\infty$}
		\State $\eta^{(t+1)} \leftarrow \eta_{\max}$, $\beta^{(t+1)} \leftarrow \beta^{(t)} + (\eta_{\max}-\eta^{(t)}) \frac{\D \beta}{\D \eta}$.
		\State $t\leftarrow t+1$
                \EndIf
	\end{algorithmic}
\end{algorithm}

\section{Quasi-spherical OSCAR}
A regularization sequence $\lambda_1,\lambda_2,\dots,\lambda_p$ for SLOPE can be arbitrarily defined as long as it is in a non-decreasing order, 
such that $0\le \lambda_1 \le \lambda_2 \le \cdots \le \lambda_p$.
Therefore, a variety of regularization sequences has been proposed for SLOPE.
This section presents a new type of regularization sequences for feature clustering. 

First, we review some designs of the regularization sequences in the previous studies before introducing the proposed type.
\citet{bogdan2015} presented two regularization sequence designs.
One is designed as a threshold sequence of the multiple testing procedure of nonzero coefficients by \citep{benjamini1995}
and defined as follows:
\begin{equation}\label{BH}
\lambda^\text{BH}_i = \eta \Phi^{-1}\kakko{1-q\frac{p-i+1}{2p}},
\end{equation}
where $\eta$ tunes the strength of penalty, $\Phi^{-1}$ is the inverse distribution function of the standard normal distribution, and
$q$ controls the target false discovery rate of nonzero coefficients in the SLOPE estimator.
\citet{kos2020} has proven some asymptotic results on the false descovery rate with this penalty under a Gaussian design, where all the $X$ entries are i.i.d. Gaussian.
The other type presented by \citet{bogdan2015} involves the previous one $\lambda_{BH}$ and recursively defined by $\lambda^\text{G}_p = \lambda^\text{BH}_p$ and
\begin{equation}\label{G}
\lambda^\text{G}_i = \eta \min\kakko{\lambda^\text{G}_{i+1}, \lambda^\text{BH}_i \sqrt{1+\frac{1}{n-p+i-3}\sum_{j>i}(\lambda^\text{G}_j)^2}}.
\end{equation}
This is designed to heuristically consider the influence of the sample correlation in $X$~\citep{bogdan2015}.

SLOPE is introduced as a generalization of OSCAR ~\citep{bondell2008simultaneous}.
OSCAR is originally defined by pairwise $L_\infty$ terms as
\begin{equation} \label{OSCAR}
\minimize_{\beta}\ \ \frac{1}{2} \nkakko{y-X\beta}^2 + \lambda_1 \sum_{i=1}^{p} \abs{\beta_i} + \lambda_2 \sum_{j<k} \max \ckakko{\abs{\beta_j}, \abs{\beta_k}},
\end{equation}
and its penalty terms can be rewritten by the SLOPE penalty $\sum_{i=1}^p \lambda^\text{OSCAR}_i |\beta |_{(i)}$ with
\begin{equation}\label{OS}
\lambda^\text{OSCAR}_i = q + \eta (i - 1),
\end{equation}
where $q$ and $\eta$ coincide with $\lambda_1$ and $\lambda2$ in \eqref{OSCAR}, respectively.
The contour surface of the penalty terms draws an octagonal shape, as in Figure~\ref{fig1}.
The solution of \eqref{OSCAR} is always the point where the contour surface of the least-square term contacts with that of the penalty terms.
Therefore, the octagonal structure encourages not only sparsity, but equality of absolute coefficients with close least-square solutions and multicollinearity.

However, the octagonal structure in OSCAR is much weakened for high-dimensional problems.
For illustration, consider the plane of $\beta_1$ and $\beta_2$, where the other coefficients $\beta_3,\dots,\beta_p$ are all zero.
Figure~\ref{fig1} shows that the upper-right vertex of the dashed contour line $\sum_{i=1}^p \lambda^\text{OSCAR}_i |\beta |_{(i)} = 1$ 
with $\lambda^\text{OSCAR}_i = i - 1$ and $\beta_3 = \cdots = \beta_p = 0$ is at $(\frac{p}{2p-1},\frac{p}{2p-1})^\top$,
which rapidly converges to $(1/2,1/2)^\top$ and reduces the octagonal shape to a diamond as $p$ increases.

\begin{figure}[htbp]
  \begin{center}
	\includegraphics{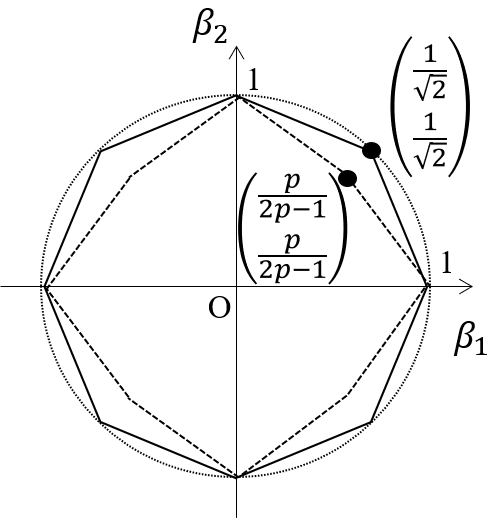}
  \end{center}
  \caption{Contour lines of $\sum_{i=1}^p \lambda_i |\beta |_{(i)} = 1$ for the OSCAR penalty $\lambda^\text{OSCAR}$ (dashed line) and the QS-OSCAR penalty $\lambda^\text{QS}$ (solid line)
given $\beta_3 = \cdots = \beta_p = 0$.}
  \label{fig1}
\end{figure}

For feature clustering, the octagonal structure should be preserved for high-dimensional problems.
This can be realized by setting the regularization sequence $\lambda$ such that all the vertices of the contour surface are located on the sphere of the same radius.
Such a sequence, which we call the quasi-spheric and octagonal shrinkage and clustering algorithm for regression (QS-OSCAR), is defined by
\begin{equation}\label{QS}
\lambda^\text{QS}_i = \eta (\sqrt{i}-\sqrt{i-1}).
\end{equation}
Figure~\ref{fig1} illustrates that the contour of this penalty where $\beta_3 = \cdots = \beta_p = 0$ always draws a regular octagon
whose vertices are on the same circle for any $p$.
Although the contour surface of the penalty $\sum_{i=1}^p \lambda^\text{QS}_i |\beta |_{(i)}$ is not a sphere,
the surface with $\lambda^\text{QS}$ is the most similar to a sphere among all possible $\lambda$ in view of the
ratio of the inradius and circumradius of the contour surface according to the following theorem:

\begin{theorem} \label{theorem2}
The ratio of $\max_{\| \beta \| = r} \sum_{i=1}^p \lambda_i |\beta|_{(i)}$ from $\min_{\| \beta \| = r} \sum_{i=1}^p \lambda_i |\beta|_{(i)}$
takes its mininum $\rho_p = \sqrt{\sum_{i=1}^p (\sqrt{i}-\sqrt{i-1})^2} = O(\sqrt{\log p})$ 
if and only if $\lambda = \lambda^\text{QS}$ in \eqref{QS} for some $\eta >0$.
\end{theorem}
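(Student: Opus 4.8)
The plan is to identify the ratio in the theorem with the ratio of the circumradius to the inradius of the unit ball $B_\lambda=\bigl\{\beta:\sum_{i=1}^p\lambda_i|\beta|_{(i)}\le 1\bigr\}$ of the SLOPE penalty, and to evaluate both radii in closed form. Since $\lambda$ is sorted in the same order as $|\beta|_{(1)}\le\cdots\le|\beta|_{(p)}$, we have $\sum_i\lambda_i|\beta|_{(i)}=\langle\lambda,b\rangle$ with $b=(|\beta|_{(1)},\dots,|\beta|_{(p)})$, so
\[
\max_{\|\beta\|=r}\sum_{i=1}^p\lambda_i|\beta|_{(i)}=\max\bigl\{\langle\lambda,b\rangle:\ 0\le b_1\le\cdots\le b_p,\ \|b\|=r\bigr\}=r\|\lambda\|,
\]
attained at $b\propto\lambda$ by Cauchy--Schwarz (feasible because $\lambda$ is nonnegative and nondecreasing), so the inradius of $B_\lambda$ is $1/\|\lambda\|$. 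For the circumradius, the cone $\{b:0\le b_1\le\cdots\le b_p\}$ is generated by the block vectors $r_k=(0,\dots,0,1,\dots,1)$ with $k$ trailing ones; writing $b=\sum_k c_kr_k$, $c_k\ge0$, and using $\langle\lambda,r_k\rangle=S_k:=\lambda_{p-k+1}+\cdots+\lambda_p$ together with $\|b\|\le\sum_kc_k\sqrt k$ gives $\langle\lambda,b\rangle=\sum_kc_kS_k\ge\bigl(\min_jS_j/\sqrt j\bigr)\|b\|$, with equality at $b=(r/\sqrt{k^\ast})r_{k^\ast}$. Hence $\min_{\|\beta\|=r}\sum_i\lambda_i|\beta|_{(i)}=r\min_{1\le k\le p}S_k/\sqrt k$ and the circumradius is $\bigl(\min_kS_k/\sqrt k\bigr)^{-1}$. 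The ratio in the theorem therefore equals $\max_{1\le k\le p}\sqrt k\,\|\lambda\|/S_k$; it is scale--invariant, so we may fix $\|\lambda\|=1$, and minimizing it is the same as \emph{maximizing} $\phi(\lambda):=\min_{1\le k\le p}S_k/\sqrt k$ over nonnegative nondecreasing unit vectors $\lambda$.

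The heart of the argument is the sharp bound $\phi(\lambda)\le1/\rho_p$, with equality only for $\lambda\propto\lambda^{\text{QS}}$. The defining feature of $\lambda^{\text{QS}}$ (written in nondecreasing order) is that its tail sums telescope to $S_k$ proportional to $\sqrt k$ for every $k$; normalizing to $\|\lambda^{\text{QS}}\|=1$ gives $S_k=\sqrt k/\rho_p$, hence $\phi(\lambda^{\text{QS}})=1/\rho_p$, i.e.\ ratio $\rho_p$. For the converse I would pair the inequalities $S_k\ge\phi(\lambda)\sqrt k$ ($k=1,\dots,p$) against a fixed nonnegative weight vector $a^{\ast}$ using the summation-by-parts identity $\sum_ka_kS_k=\langle\lambda,\tilde A\rangle$, where $\tilde A_i=\sum_{k\ge p-i+1}a_k$ is the vector of partial sums of $a$. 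Choosing $a^{\ast}$ so that $\tilde A^{\ast}\propto\lambda^{\text{QS}}$ forces $a^{\ast}_m\propto 2\sqrt m-\sqrt{m-1}-\sqrt{m+1}$ at interior indices (with a similar, positive boundary value at $m=p$); these are strictly positive by the strict concavity of $t\mapsto\sqrt t$. Then
\[
\phi(\lambda)\sum_ka^{\ast}_k\sqrt k\ \le\ \sum_ka^{\ast}_kS_k\ =\ \langle\lambda,\tilde A^{\ast}\rangle\ \le\ \|\lambda\|\,\|\tilde A^{\ast}\|,
\]
and two more telescoping computations evaluate the outer terms (as $\phi(\lambda)c\rho_p^2$ and $c\rho_p$), yielding $\phi(\lambda)\le1/\rho_p$. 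If equality holds, the last Cauchy--Schwarz step forces $\lambda\propto\tilde A^{\ast}\propto\lambda^{\text{QS}}$, and the first step, since every $a^{\ast}_k>0$, forces $S_k=\phi(\lambda)\sqrt k$ for all $k$; with $\|\lambda\|=1$ this identifies $\lambda=\lambda^{\text{QS}}/\rho_p$, i.e.\ $\lambda=\lambda^{\text{QS}}$ for a suitable $\eta>0$ in the unnormalized statement.

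The asymptotic estimate follows from $(\sqrt i-\sqrt{i-1})^2=(\sqrt i+\sqrt{i-1})^{-2}$, which lies between $1/(4i)$ and $1/(4(i-1))$, so $\rho_p^2=\sum_{i=1}^p(\sqrt i-\sqrt{i-1})^2=\tfrac14\log p+O(1)$ and $\rho_p=O(\sqrt{\log p})$. No special treatment is needed when some $\lambda_i$ vanish: as long as $\lambda\ne0$ one has $S_p=\sum_i\lambda_i>0$, so every $S_k$ and both radii are finite and positive, and the penalty is a genuine norm.

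I expect the main obstacle to be the construction and verification of the dual weight vector $a^{\ast}$: the entire equality characterization rests on $a^{\ast}$ being \emph{strictly} positive in every coordinate --- exactly what strict concavity of the square root provides --- and on the telescoping identities that make the Cauchy--Schwarz bound tight precisely at $\lambda^{\text{QS}}$. The second point needing care is the evaluation of $\min_{\|\beta\|=r}\sum_i\lambda_i|\beta|_{(i)}$, i.e.\ that the circumradius of the SLOPE ball is realized by an equal-weight sparse direction; the cone-generator argument above keeps this short but it must be stated carefully.
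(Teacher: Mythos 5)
Your proposal is correct, and its skeleton coincides with the paper's: the maximum is $r\|\lambda\|$ by Cauchy--Schwarz, the minimum is identified with $r\min_{1\le k\le p} S_k/\sqrt{k}$ by decomposing monotone vectors over the block directions, and optimality of $\lambda^{\text{QS}}$ comes from one more Cauchy--Schwarz against the quasi-spherical direction, whose equality case gives the ``only if.'' The difference is in how that last step is packaged. The paper argues primally: it writes the monotone part of the sphere as normalized convex combinations of the unit block vectors $b_1,\dots,b_p$, shows the minimum equals the minimum over the hull $\mathcal{B}'$ of the points $r b_i$ (your cone-plus-triangle-inequality argument is an equivalent shortcut), and then bounds that minimum by evaluating at the single feasible point $\check{\beta}=r\check{\lambda}/\rho_p\in\mathcal{B}'$. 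You argue dually: you pair the constraints $S_k\ge\phi(\lambda)\sqrt{k}$ with strictly positive weights $a^*_k$ whose tail sums are proportional to $\lambda^{\text{QS}}$ and sum by parts. These are two phrasings of the same inequality (your $a^*_k$ are, up to scaling, exactly the hull coefficients expressing $\check{\beta}$ in the generators), so I would not call the route genuinely different; but your version has the merit of making explicit the positivity $2\sqrt{m}-\sqrt{m-1}-\sqrt{m+1}>0$ that the paper needs implicitly for the unverified claim $\check{\beta}\in\mathcal{B}'$, and your two-sided bound $(\sqrt{i}-\sqrt{i-1})^2\in[\tfrac{1}{4i},\tfrac{1}{4(i-1)}]$ gives a cleaner (indeed sharper, $\rho_p^2=\tfrac14\log p+O(1)$) asymptotic than the paper's integral estimate, which carries a harmless constant-factor slip. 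No gaps.
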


\begin{proof}
From the Cauchy-Schwarzs inequality, we have $\sum_{i=1}^p \lambda_i |\beta|_{(i)} \le  \sqrt{\sum_{i=1}^p \lambda_i^2}\sqrt{\sum_{i=1}^p |\beta|_{(i)}^2} = \| \lambda \| \| \beta \|$,
where the equality holds if and only if $|\beta|_{(i)} = r \lambda_i/\| \lambda \| $ for $r>0$ and $i=1,\dots,p$.
Therefore, we have $\max_{\| \beta \| = r} \sum_{i=1}^p \lambda_i |\beta|_{(i)}= r\| \lambda \|$.

Next, we evaluate $\min_{\| \beta \| = r} \sum_{i=1}^p \lambda_i |\beta|_{(i)}$.
We obtain $\min_{\| \beta \| = r} \sum_{i=1}^p \lambda_i |\beta|_{(i)} = \min_{\beta \in \mathcal{B}} \langle \lambda,\beta \rangle$,
where $\mathcal{B} = \{\beta; \| \beta \| = r, 0\le \beta_1 \le \cdots \le \beta_p\}$.
The set $\mathcal{B}$ is also represented by $\mathcal{B} = \{ \beta = \frac{r \sum_{i=1}^p c_i b_i}{\| \sum_{i=1}^p c_i b_i \|}; 0\le c_1,\dots,c_p \le 1, \sum_{i=1}^p c_i = 1\}$,
where $b_1 = (0,\dots,0,1)^\top,\, b_2 = (0,\dots,0,\frac{1}{\sqrt{2}},\frac{1}{\sqrt{2}})^\top,$\\$\dots, b_p = (\frac{1}{\sqrt{p}},\dots,\frac{1}{\sqrt{p}})^\top$.
Subsequently, since $rb_1,\dots,rb_p \in \mathcal{B}$, we obtain 
\begin{equation*}
\min_{\beta \in \mathcal{B}} \langle \lambda,\beta \rangle \le \min_{i=1,\dots,p}  \langle \lambda,rb_i \rangle 
= \min_{\beta \in \mathcal{B}^\prime} \langle \lambda,\beta \rangle, 
\end{equation*}
where $\mathcal{B}^\prime = \{ \beta = r \sum_{i=1}^p c_i b_i; 0\le c_1,\dots,c_p \le 1, \sum_{i=1}^p c_i = 1\}$.
In contrast, since $\| \sum_{i=1}^p c_i b_i \| \le 1$, we have
$\langle \lambda,\frac{r \sum_{i=1}^p c_i b_i}{\| \sum_{i=1}^p c_i b_i \|} \rangle \ge \langle \lambda,r\sum_{i=1}^p c_i b_i \rangle$
 for any $0\le c_1,\dots,c_p \le 1, \sum_{i=1}^p c_i = 1$; hence,
\begin{equation*}
\min_{\beta \in \mathcal{B}} \langle \lambda,\beta \rangle 
\ge \min_{\beta \in \mathcal{B}^\prime} \langle \lambda,\beta \rangle.
\end{equation*}
Consequently, it holds that $\min_{\| \beta \| = r} \sum_{i=1}^p \lambda_i |\beta|_{(i)} = \min_{\beta \in \mathcal{B}} \langle \lambda,\beta \rangle = \min_{\beta \in \mathcal{B}^\prime} \langle \lambda,\beta \rangle$.

Let $\check{\lambda} = \lambda^\text{QS}/\| \lambda^\text{QS} \| = (\sqrt{p}-\sqrt{p-1},\sqrt{p-1}-\sqrt{p-2},\dots,\sqrt{2}-1,1)^\top /\rho_p$
be the normalized weights of $\lambda^\text{QS}$.
When $\lambda \propto \check{\lambda}$, we obtain $\langle \lambda,rb_i \rangle = r \| \lambda \|/\rho_p$ for $i=1,\dots,p$,
and hence $\min_{\| \beta \| = r} \sum_{i=1}^p \lambda_i |\beta|_{(i)} = \min_{\beta \in \mathcal{B}^\prime} \langle \lambda,\beta \rangle = r \| \lambda \|/\rho_p$.
Furthermore, since $\check{\beta} = r\check{\lambda}/\rho_p \in \mathcal{B}^\prime$, we obtain
$\langle \lambda,\check{\beta} \rangle \le r \| \lambda \|/\rho_p$ from the Cauchy-Schwarzs inequality,
where the equality holds if and only if $\lambda \propto \check{\lambda}$.

Above all, we have $\min_{\| \beta \| = r} \sum_{i=1}^p \lambda_i |\beta|_{(i)} \le r \| \lambda \|/\rho_p$, and hence
$\frac{\max_{\| \beta \| = r} \sum_{i=1}^p \lambda_i |\beta|_{(i)}}{\min_{\| \beta \| = r} \sum_{i=1}^p \lambda_i |\beta|_{(i)}} \ge \frac{r \| \lambda \|}{r \| \lambda \|/\rho_p} = \rho_p$,
where the equality holds if and only if $\lambda = \lambda^\text{QS} \propto \check{\lambda}$ for some $\eta$.

Finally, as $\sqrt{i}-\sqrt{i-1} = \sqrt{i}(1-\sqrt{1-\frac{1}{i}}) = \frac{1}{2\sqrt{i}} +  o(\frac{1}{\sqrt{i}})$,
we obtain
\begin{align*}
\rho_p^2 &= \sum_{i=1}^p (\sqrt{i}-\sqrt{i-1})^2 \\
&= 1 + \sum_{i=2}^p \dkakko{\frac{1}{2\sqrt{i}} +  o\kakko{\frac{1}{\sqrt{i}}}} \times (\sqrt{i}-\sqrt{i-1})\\
&\le 1 + \int_1^p \dkakko{\frac{1}{2\sqrt{z}} +  o\kakko{\frac{1}{\sqrt{z}}}}d\sqrt{z} \\
&= 1 + \log p + o(\log p)\\
&= O(\log p).
\end{align*}
Consequently, $\rho_p = O(\sqrt{\log p})$, which completes the proof.
\end{proof}

\begin{remark}
From Theorem~\ref{theorem2}, 
the ratio of the circumradius and inradius of the surface $\sum_{i=1}^p \lambda_i |\beta|_{(i)} = 1$
takes its mininum $\rho_p$ if and only if $\lambda = \lambda^\text{QS}$ for some $\eta >0$.
It implies that the surface becomes the most similar to a sphere when $\lambda = \lambda^\text{QS}$.
Although the surface becomes less similar to a sphere as $p$ increases, it is not fast because $\rho_p = O(\sqrt{\log p})$.
Figure~\ref{fig2} shows that $\rho_p$ is only approximately $1.47$ for $p = 100$ and approximately $1.82$ even for $p=10000$.

In contrast, the circumsphere and insphere of the contour surface $\sum_{i=1}^p \lambda_i |\beta|_{(i)} = 1$ with $\lambda = \lambda^\text{OSCAR}$
contact with the surface at $\beta = (0,\dots,0,1/\lambda^\text{OSCAR}_p)^\top$ and $\beta = \lambda^\text{OSCAR}/\| \lambda^\text{OSCAR} \|^2$, respectively.
Therefore, the ratio of the circumradius and inradius of the contour surface in OSCAR is
$\| \lambda^\text{OSCAR} \|/\lambda^\text{OSCAR}_p \gtrsim \sqrt{p}$, which goes to infinity much faster than QS-OSCAR.
\end{remark}

\begin{figure}[htbp]
  \begin{center}
	\includegraphics{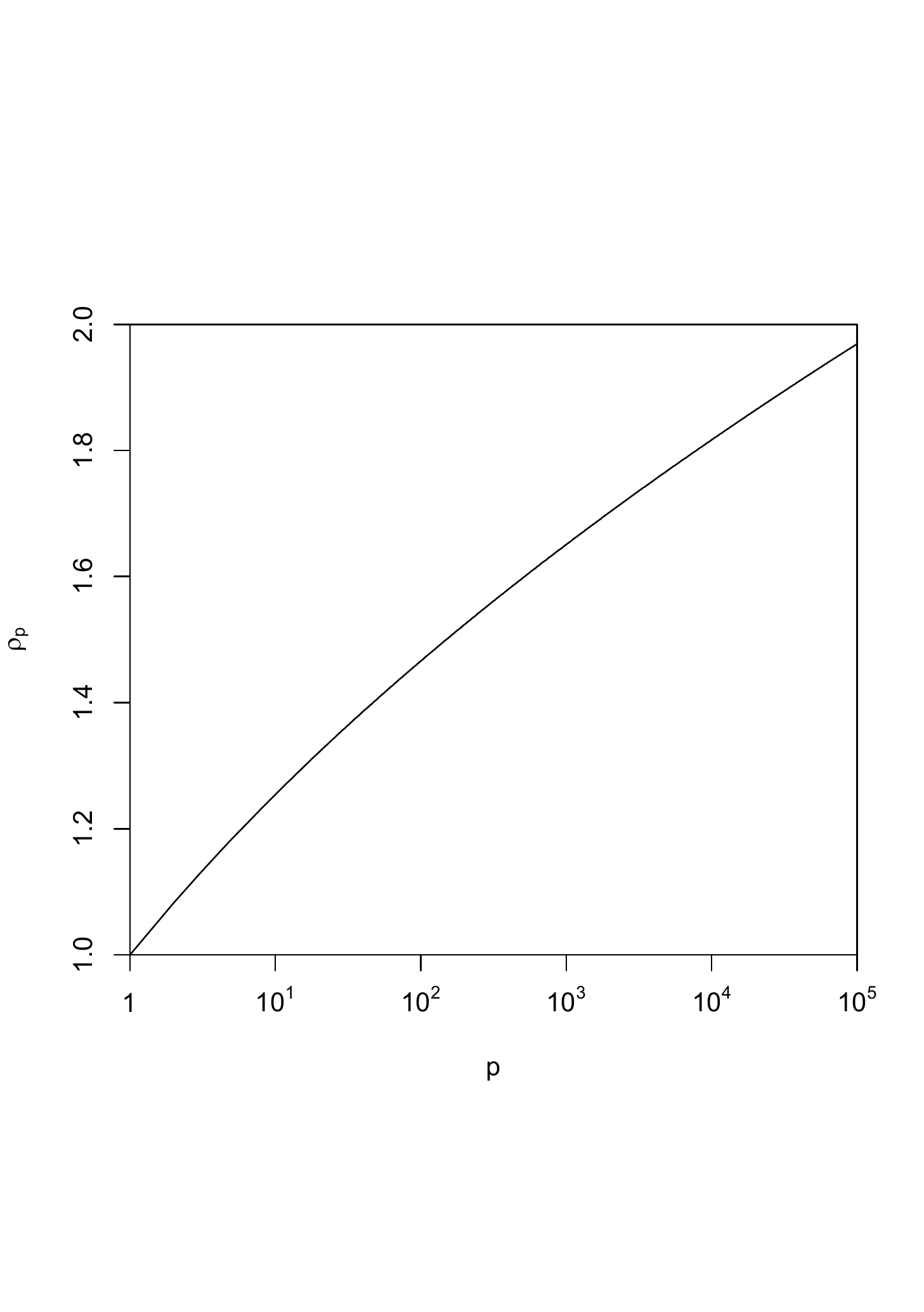}
  \end{center}
  \caption{Development of $\rho_p$ with respect to $p$.}
  \label{fig2}
\end{figure}

\section{Simulation Studies}
In this section, we compare some features of the various regularization sequences for SLOPE introduced in the previous section
through numerical experiments with synthetic datasets.
All experiments were conducted on a Windows 10 64-bit machine with an Intel i7-8665U CPU at 1.90 GHz and 16 GB of RAM.

The synthetic datasets were generated from the model $y=X\beta + e$, where $e \sim N(0, I_n)$.
The problem size $(p,n)$ was set from four levels $\{ (20,200),(40,400),(80,800),(160,1600) \}$.
There are two scenarios for the true coefficients $\beta$ and covariates $X$.
The first scenario involves $\beta$ generated by $\beta = (\theta,-\theta)^\top$, where $\theta \sim N(0, I_{0.5p})$,
and each row $x_i$ of $X$ generated independently by $x_i \sim N(0, \Sigma)$, where
\begin{equation*}
\Sigma = \frac{1}{\sqrt{n}}\left( \begin{array}{cc} I_{0.5p} & 0.8I_{0.5p} \\  0.8I_{0.5p} & I_{0.5p} \end{array} \right).
\end{equation*}
This covariance matrix $\Sigma$ means that the covariates of $\beta_i = \theta_i$ and $\beta_{i+0.5p} = -\theta_i$ are correlated as 
$Corr(x_{i,j},x_{i+0.5p,j}) = 0.8$ for $i=1,\dots,0.5p$ and $j=1,\dots,n$.
The second scenario involves each component of $\beta$ independently sampled from $\{-2,-1,0,1,2\}$, and
each element of $X$ independently sampled from $\{-1,0,1\}$.
For each setting, 100 datasets are generated to evaluate the average performance.
From each dataset, 
four solution paths of SLOPE along $\eta \in [0,\infty)$ with $\lambda = \lambda^\text{BH}$, $\lambda^\text{G}$, $\lambda^\text{OSCAR}$ and $\lambda^\text{QS}$, respectively,
were calculated through Algorithm~\ref{algorithm}.

Tables~\ref{table1} and \ref{table2} show the mean number of nonzero coefficients in $\beta(\eta^{(t)})$ for $t=1,\dots,T$ in the respective scenarios,
where $T$ is the number of iterations in Algorithm~\ref{algorithm}.
In both scenarios, the number of nonzero coefficients tends to be smaller in $ \lambda^\text{BH}$ and $\lambda^\text{G}$ than the others.
The mean number of zeros in $\beta(\eta^{(t)})$ is only a few in $\lambda^\text{OSCAR}$.

Tables~\ref{table3} and \ref{table4} list the mean numbers of nonzero groups in $\beta(\eta^{(t)})$ for $t=1,\dots,T$.
The mean numbers of nonzero groups with $\lambda^\text{QS}$ are about half of the dimension $p$ and the smallest of the four regularization sequences in both scenarios.
In contrast, the mean numbers of nonzero groups are not much different among $ \lambda^\text{BH}$, $\lambda^\text{G}$ and $ \lambda^\text{OSCAR}$.

Tables~\ref{table5} and \ref{table6} show the mean number of fusing/splitting events that occurred in Algorithm~\ref{algorithm}.
The switching events in the algorithm are not counted because they do not change the slope of the solution paths, and their runtimes are much shorter than the other events.
The number of events increases by approximately four times as the dimension $p$ is doubled in each case, 
which implies that the number of events is approximately $O(p^2)$ for these scenarios.
Note that evaluating the number of iterations in general cases is difficult and might be exponentially large (see for example, \citep{mairal2012}).
The numbers of events are much larger in Scenario 1 than in Scenario 2, which may reflect the complexity of the coefficients $\beta$ and the covariates $X$.
The large numbers of events in high-dimensional regression make it difficult to detect the change points in a solution path exhaustively using grid search methods.
Hence, the proposed solution path algorithm that tracks all the breaking points is needed for fine-tuning of $\eta$ and feature grouping.

\begin{table}[htbp]
  \caption{Mean number of nonzero coefficients along a solution path in Scenario 1.}
  \label{table1}
  \centering
  \begin{tabular}{crrrr}
    \toprule
    Regularization sequence $\lambda$ & \multicolumn{4}{c}{Problem size $(p,n)$} \\
    \cmidrule(r){2-5}
     & $(20,200)$ & $(40,400)$ & $(80,800)$ & $(160,1600)$ \\
    \midrule
    Benjamini and Hochberg $\lambda^\text{BH}$& 14.3  & 29.6 &  59.8 & 120.9\\
    Gaussian $\lambda^\text{G}$              & 14.0  & 28.9 &  58.2 & 118.0\\
    OSCAR $\lambda^\text{OSCAR}$                 & 18.6  & 38.0 &  77.1 & 155.6\\
    Quasi-spheric OSCAR $\lambda^\text{QS}$   & 16.6  & 33.7 &  67.9 & 137.0\\
    \bottomrule
  \end{tabular}
\end{table}

\begin{table}[htbp]
  \caption{Mean number of nonzero coefficients along a solution path in Scenario 2.}
  \label{table2}
  \centering
  \begin{tabular}{crrrr}
    \toprule
    Regularization sequence $\lambda$& \multicolumn{4}{c}{Problem size $(p,n)$} \\
    \cmidrule(r){2-5}
     & $(20,200)$ & $(40,400)$ & $(80,800)$ & $(160,1600)$ \\
    \midrule
    Benjamini and Hochberg $\lambda^\text{BH}$& 15.2 &  31.5 &  64.6 & 130.1\\
    Gaussian $\lambda^\text{G}$              & 14.8 &  30.7 &  63.1 & 127.2\\
    OSCAR $\lambda^\text{OSCAR}$                 & 18.8 &  38.3 &  78.3 & 157.5\\
    Quasi-spheric OSCAR $\lambda^\text{QS}$   & 17.5 &  35.3 &  71.5 & 142.9\\
    \bottomrule
  \end{tabular}
\end{table}

\begin{table}[htbp]
  \caption{Mean number of nonzero fused groups along a solution path in Scenario 1.}
  \label{table3}
  \centering
  \begin{tabular}{crrrr}
    \toprule
    Regularization sequence $\lambda$& \multicolumn{4}{c}{Problem size $(p,n)$} \\
    \cmidrule(r){2-5}
     & $(20,200)$ & $(40,400)$ & $(80,800)$ & $(160,1600)$ \\
    \midrule
    Benjamini and Hochberg $\lambda^\text{BH}$& 12.7 &  26.7 &  54.1 &109.7\\
    Gaussian $\lambda^\text{G}$              & 13.0 &  27.1 &  54.7 &111.1\\
    OSCAR $\lambda^\text{OSCAR}$                 & 12.4 &  26.1 &  53.4 &108.1\\
    Quasi-spheric OSCAR $\lambda^\text{QS}$   & 10.9 &  23.0 &  47.3 & 96.7\\
    \bottomrule
  \end{tabular}
\end{table}

\begin{table}[htbp]
  \caption{Mean number of nonzero fused groups along a solution path in Scenario 2.}
  \label{table4}
  \centering
  \begin{tabular}{crrrr}
    \toprule
    Regularization sequence $\lambda$& \multicolumn{4}{c}{Problem size $(p,n)$} \\
    \cmidrule(r){2-5}
     & $(20,200)$ & $(40,400)$ & $(80,800)$ & $(160,1600)$ \\
    \midrule
    Benjamini and Hochberg $\lambda^\text{BH}$& 12.8 &  26.2 &  53.4 & 106.8\\
    Gaussian $\lambda^\text{G}$              & 13.2 &  27.2 &  55.6 & 111.4\\
    OSCAR $\lambda^\text{OSCAR}$                 & 12.3 &  25.8 &  53.0 & 106.1\\
    Quasi-spheric OSCAR $\lambda^\text{QS}$   & 11.2 &  22.5 &  45.3 &  90.3\\
    \bottomrule
  \end{tabular}
\end{table}

\begin{table}[htbp]
  \caption{Mean number of fusing/splitting events along a solution path in Scenario 1.}
  \label{table5}
  \centering
  \begin{tabular}{crrrr}
    \toprule
    Regularization sequence $\lambda$& \multicolumn{4}{c}{Problem size $(p,n)$} \\
    \cmidrule(r){2-5}
     & $(20,200)$ & $(40,400)$ & $(80,800)$ & $(160,1600)$ \\
    \midrule
    Benjamini and Hochberg $\lambda^\text{BH}$& 140   &      526    &    2029  &      8079\\
    Gaussian $\lambda^\text{G}$              & 139   &      524    &    2031  &      8118\\
    OSCAR $\lambda^\text{OSCAR}$                 & 138   &      496    &    1887  &      7462\\
    Quasi-spheric OSCAR $\lambda^\text{QS}$   & 178   &      656    &    2414  &      9455\\
    \bottomrule
  \end{tabular}
\end{table}

\begin{table}[htbp]
  \caption{Mean number of fusing/splitting events along a solution path in Scenario 2.}
  \label{table6}
  \centering
  \begin{tabular}{crrrr}
    \toprule
    Regularization sequence $\lambda$ & \multicolumn{4}{c}{Problem size $(p,n)$} \\
    \cmidrule(r){2-5}
     & $(20,200)$ & $(40,400)$ & $(80,800)$ & $(160,1600)$ \\
    \midrule
    Benjamini and Hochberg $\lambda^\text{BH}$&68    &     245    &     968   &     3799 \\
    Gaussian $\lambda^\text{G}$              &75    &     271    &    1073   &     4264 \\
    OSCAR $\lambda^\text{OSCAR}$                 &51    &     178    &     710   &     2788 \\
    Quasi-spheric OSCAR $\lambda^\text{QS}$   &52    &     182    &     697   &     2743 \\
    \bottomrule
  \end{tabular}
\end{table}

\section{Conclusion}
This study proposed a solution path algorithm that yields an entire exact solution path for SLOPE, which is extended from the path algorithm for OSCAR~\citep{takahashi2020}.
The algorithm has three types of events: 
a fusing event for a pair of feature groups, a splitting event of a group, and a switching event that affects the next splitting time.
The next timings of the splitting events are derived from a simple optimality condition given in Theorem~\ref{theorem}.

A new design for regularization weights $\lambda$, called QS-OSCAR, is proposed for feature clustering in Section 3.
The simulation study demonstrated that QS-OSCAR groups absolute coefficients more efficiently than OSCAR and other penalties for SLOPE.
Since the octagonal contour surface of QS-OSCAR is close to a sphere for a reasonably high dimension,
QS-OSCAR may behave like $L_2$ regularization for shrinking and encouraging the similarity of the absolute coefficients of the highly correlated predictors.
In contrast, unlike $L_2$ regularization, QS-OSCAR also plays a role in dimensionality reduction by promoting the sparsity and grouping of coefficients.
Additionally, we can obtain an entire solution path for QS-OSCAR using the proposed algorithm.

\end{document}